\numberwithin{equation}{section}
	\def\namedlabel#1#2{\begingroup
		#2%
		\def\@currentlabel{#2}%
		\phantomsection\label{#1}\endgroup
	}
\newcommand{\qwbund}{\qwbundle[alternate]{}}
\newcommand{\feas}{\mathcal{F}}
\newcommand{\mix}{B}
\DeclareMathOperator{\Span}{span}
\newcommand{\qaoa}{\textsc{CM-QAOA}}
\newcommand{\fjsp}{\textsc{FJSP}}
\newcommand{\of}[1]{\left(#1\right)}
\newcommand{\cof}[1]{\left\{#1\right\}}
\let\abs\vof
\newcommand{\imply}{\; \Rightarrow \;}
\newcommand{\mDef}{\vcentcolon =}
\newcommand{\N}{\mathbb{N}} 
\newcommand{\R}{\mathbb{R}} 
\newcommand{\C}{\mathbb{C}} 
\newcommand{\OneOp}{\mathbb{1}}
\newcommand{\imun}{i}
\newcommand{\primed}{^\prime}
\newcommand{\inv}{^{-1}}
\DeclarePairedDelimiterX\Set[1]{\lbrace}{\rbrace}%
{  #1 }
\DeclareMathOperator{\nbhd}{nbhd}
 \DeclarePairedDelimiterX\set[1]\lbrace\rbrace{\setaux#1}
 \def\setaux#1|{#1\;\delimsize\vert\;}
\theoremstyle{definition}
\newtheorem{definition}{Definition}
\newtheorem{theorem}{Theorem}
\newtheorem{lemma}{Lemma}
\title{Graph-controlled Permutation Mixers in QAOA for the Flexible Job-Shop Problem}
\author{Lilly Palackal$^{\ast}$ \\
  Infineon Technologies AG \\
  Technical University of Munich \\
 {\underline{lilly.palackal@infineon.com}} \\ \\
   \And
  Leonhard Richter$^{\ast}$ \\
  Friedrich-Alexander University \\
  Erlangen-Nürnberg \\
  {\underline{leonhard.richter@outlook.de} } \\  
  \\
  {\footnotesize $^{\ast}$Authors contributed equally}\\ \\
   \And
  Maximilian Hess \\
  Infineon Technologies AG \\
  Leibniz University Hannover \\
  {\underline{maximilian.hess@infineon.com} } \\ \\
   }
\date{}
\begin{document}
\maketitle
\thispagestyle{specialfooter} 
\begin{abstract}
    One of the most promising attempts towards solving optimization problems with quantum computers in the noisy intermediate scale era of quantum computing are variational quantum algorithms.
The Quantum Alternating Operator Ansatz provides an algorithmic framework for constrained, combinatorial optimization problems.
As opposed to the better known standard QAOA protocol, the constraints of the optimization problem are built into the mixing layers of the ansatz circuit, thereby limiting the search to the much smaller Hilbert space of feasible solutions.
In this work we develop mixing operators for a wide range of scheduling problems including the flexible job shop problem.
These mixing operators are based on a special control scheme defined by a constraint graph model.
After describing an explicit construction of those mixing operators, they are proven to be feasibility preserving, as well as exploring the feasible subspace.

\end{abstract}

\subsubsection*{Keywords:}

Quantum Computing, Quantum Algorithms, Constraint-Mixer QAOA, Job-Shop Scheduling

\section{Introduction}

    The usage of quantum mechanical phenomena in computing provably enables asymptotic speedups over classical computers in some computational tasks (\cite{Shor_1997}, \cite{grover1996fast}), and researchers are aiming to discover such algorithmic breakthroughs in others.
    The reality of quantum computing right now, sometimes called the noisy intermediate scale quantum (NISQ) era, presents additional challenges: While quantum processors are only available with a modest number of noisy qubits, the algorithms under consideration must address these shortcomings. In the case of quantum optimization, hopes for near- to mid-term algorithmic advances rest on the shoulders of variational hybrid algorithms (\cite{Cerezo_2021}). 
    All of them involve a parameterized quantum circuit $U(\theta)$ which is used to generate an ansatz state 
    \begin{equation}
        \ket{\psi(\theta)} := U(\theta)\ket{0}.
    \end{equation} 
    Measuring this ansatz state several times in the computational basis provides solution candidates (the bit string representations of the computational basis states) to the (binary) optimization problem. Based on the quality of these samples with respect to the optimization problem, the parameters $\theta$ are then updated and fed back into the quantum circuit.
    In this algorithmic scheme, one qubit roughly corresponds to one binary variable in the optimization problem. Without the need for spatial overhead, such as auxiliary quantum registers, relevant problem sizes can be targeted much sooner, thereby adressing the ``intermediate scale" concern. 
    Due to the variational nature of the algorithm, where measurement outcomes inform the choice of parameters and thus the eventual solution, noise is taken into account and filtered out to a certain extent. This is in contrast to pure quantum algorithms, where each step ``trusts" the preceding steps to prepare a certain quantum state, thus allowing errors to propagate easily.\\
    The Quantum Alternating Operator Ansatz \parencite{hadfield_diss} also referred to as the constraint mixer Quantum Approximate Optimization Algorithm (\qaoa{}) \parencite{CM_QAOA_Fuchs2022} is a framework for such algorithms. Its setup is very similar to the better known Quantum Approximate Optimization Algorithm \parencite{farhi_qaoa}, involving alternating mixing and phase separation layers. 
    The main distinction point of \qaoa{} is to explore only feasible solutions for a given optimization problem, thus drastically reducing the dimension of the search space. This is in contrast to many other variational algorithms like standard QAOA, where infeasible solutions are assigned undesirably high values via penalty terms and the mixing layers create a superposition of all bit strings.
    In this work, we introduce a concrete construction following this scheme, which is suitable for a possibly wide range of scheduling problems including the \emph{Flexible Job-Shop Problem} (\fjsp{}).
    To this end we consider a \emph{constraint-graph model} that encodes the constraints of \fjsp{}-instances in a graph, similarly to the work of \cite{hannover_paper}.
    We use the constraint graph to define a unitary mixing operator for the \fjsp{} which is made up of controlled permutation operators.
    We prove that this mixing operator is both feasibility preserving and explores the feasible subspace.\\
    In \Cref{sec:preliminaries} we formally introduce the two main ingredients of this work, namely the algorithmic framework CM-QAOA in \Cref{subsec:qaoa} and the flexible job shop problem in \Cref{subsec:fjsp}. Section \ref{sec:cm-operators} goes on to construct the constraint graph of the \fjsp{} (\Cref{subsec:fjsp-constraint-graph}). In \Cref{subsec:graph-controlled permuation mixers} we go on to introduce the concept of controlled permutation unitaries and provide an explicit construction of a mixing operator for the \fjsp{}. We prove that this operator is feasibility-preserving in \Cref{thm:feasibility} and that it explores the feasible subspace in \Cref{thm:explorability}.
    Finally, in \Cref{sec:discussion}, the results are compared to related work and the practical limitations of our approach are discussed.
    
\section{Preliminaries}\label{sec:preliminaries}
\subsection{The Quantum Alternating Operator Ansatz}\label{subsec:qaoa}
    The \qaoa{}, originally proposed by \textcite{hadfield_diss}, is a framework of variational quantum algorithms generalizing the Quantum Approximate Optimization Algorithm \parencite{farhi_qaoa}.
    The \qaoa{} allows for a restriction of the search space to the subspace of feasible states, i.e. computational basis states whose bit string representations correspond to feasible solutions to an optimization problem. More precisely, the subspace of feasible solutions is spanned by computational basis states corresponding to bit strings which satisfy all constraints of the given problem instance.
    This potentially decreases the size of the state space in which the goal function is to be optimized substantially.
    To achieve this, the main effort in algorithm design is shifted from constructing a cost Hamiltonian representing the goal function as well as the constraints of a problem instance towards the implementation of the constraints directly into mixing operators. A combinatorial optimization problem which can be solved by \qaoa{} is given by a pair $\of{F,f}$, where, for $N\in\N$, $f:\cof{0,1}^N\to \R$ is a cost function to be minimized\footnote{As maximization of $f$ is the same as minimization of $-f$, it is sufficient to only consider minimization problems.} and 
    \begin{equation}\label{eq:feas_classical}
        F\subset \cof{0,1}^N
    \end{equation}
    is the set of feasible solutions typically characterized by a list of constraints. In the \qaoa{} the choice of alternating operators is more flexible compared to the QAOA. Especially, the mixing operators no longer need to be generated by a mixer Hamiltonian as in the work of \cite{farhi_qaoa}, but is rather required to fulfill certain \emph{design criteria} \cite{hadfield_diss}. The substitute for the QAOA-mixer $\sum_i X_i$ must have two features which ensure that the search of \qaoa{} remains within the feasible subspace $\feas$ and at the same time every feasible state can be reached for a suitable choice of parameters.

    \begin{definition}[Design criteria for mixing operators]\label{def:feas_reach}
        Let the tuple $(F \subset \{0,1\}^N, f)$ define a constrained optimization problem and let $B_{\beta} = \set*{\mix\of{\beta}|\beta\in\R}$ be a family of unitaries acting on the Hilbert space $\mathcal{H} \coloneqq (\C^2)^{\otimes N}$. Denote the subspace of feasible solutions by $\feas \coloneqq \Span(\{\ket{x} | x \in F\}) \subset \mathcal{H}$.
        The family $B_\beta$ is said to \dots
        \begin{itemize}
            \item[\dots] \emph{preserve feasibility}, if and only if
                $$\forall\beta\in\R\colon\:\mix\of{\feas}\mDef\set*{\mix(\beta)\ket{\psi}|\ket{\psi}\in\feas}\subseteq\feas.$$
            \item[\dots] \emph{explore the feasible subspace}, if and only if for all $x,y\in F\colon$ there exists a $\beta^{\ast}\in\R\qc$ and $r\in\N$ s.t.
            $$\abs{
                \bra{y}\mix^r\of{\beta^\ast}\ket{x}
            }>0.$$
        \end{itemize}
    \end{definition}
    \begin{definition}[\qaoa{}]\label{def:qaoa}
        
        A \qaoa{} instance is given by two families of unitary operators, $B_{\beta} = \set*{\mix\of{\beta} | \beta\in\R}$ and $C_{\gamma} = \set*{e^{-\imun \gamma H_f}|\gamma\in\R}$, and a feasible initial state $\ket{\psi_0}\in\feas$, such that each operator in $B_{\beta}$ fulfills the criteria in \Cref{def:feas_reach}.
        Here $H_f = \sum_{x \in \{0,1\}^N} f(x) \ketbra{x}$ is the Hamiltonian representing the function $f$ diagonally.\\
        The operator $\mix(\beta)$ is referred to as the \emph{mixing operator} and the operator $e^{-\imun \gamma H_f}$ is called \emph{phase-separation operator}.
    \end{definition}
    
\subsection{The Flexible Job-Shop Problem}\label{subsec:fjsp}  
    An example of a combinatorial optimization problem, which is relevant to industrial applications, is the \emph{Flexible Job-Shop Problem} (\fjsp{}).
    A number of jobs, each of which is given by a sequence of operations of varying processing times, have to be scheduled on a number of machines with the goal of optimizing a performance indicator, e.g. minimizing the overall execution time, often called \emph{makespan}.
    More concretely, an instance of \fjsp{} comprises the following data:
    \begin{itemize}
        \item $n_J$ jobs $J_i\in J=\cof{J_1,\dots,J_{n_J}}$, for some $n_J\in \N$,
        \item For each $j\in [n_J]$, a set of $p_j$ operations $O_{j,o}$ in $O_j=\cof{O_{j,1},\dots,O_{j,p_j}}$, for some $p_j\in \N$
        \item $n_M$ machines $M_m$ in $M=\cof{M_1,\dots,M_{n_M}}$, for some $n_M\in\N$,
        \item time-slots $T_t$ in $T=\cof{T_1,\dots,T_{n_T}}$, for some number of time steps $n_T\in\N$.
    \end{itemize}
    Each job $J_j$ consists of the operations $O_j$, that shall be assigned to a machine $M_m$ in $M$ at a given time-slot $T_t\in T$.
    Denoting the set of all operations by $O = \bigcup_{j,o} O_{j,o}$, an \emph{assignment} of operation $O_{j,o}$ is defined to be a tuple $(O_{j,o},M_m,T_t)\in O\times M\times T$.
    A subset $S\subseteq O\times M\times T$ of assignments, is called a \emph{schedule}.
The following additional data are needed to define feasible schedules.
    \begin{itemize}
    
        \item For each $O_{j,o}\in O$ a set $M_{j,o}\subseteq M$ of machines the operation may be executed on.
        \item For each $O_{j,o}\in O$ and $M_m\in M$, a duration $d_{j,o,m}\in N$ the operation $O_{j,o}$ takes, when being executed on machine $M_m$.
                If $M_m\notin M_{j,o}$ the corresponding duration does not matter and is set to infinity $d_{j,o,m}=\infty$.
    \end{itemize}

    A schedule is called feasible if it satisfies the following constraints:
    \begin{enumerate}
        \item\label{itm:ass_constr}\textbf{Assignment constraint} For every operation $O_{j,o}\in O$, there is exactly one time-slot $T_t\in T$ and exactly one machine $M_m\in M$, such that $\of{O_{j,o},M_m,T_t}\in S$. In other words, $\of{O_{j,o},M_m,T_t}\in S \Rightarrow \of{O_{j,o},M_{m'},T_{t'}} \notin S$ if $m \neq m'$ and $t \neq t'$.
        \item\label{itm:ord_constraint} \textbf{Order constraint} For all assignments $\of{O_{j,o},M_m,T_t}$, $\of{O_{j,o\primed},M_{m\primed},T_{t\primed}}\in S$ with $o<o\primed$, operation $O_{j,o\primed}$ is started only after operation $O_{j,o}$ has been finished, i.e. $t + d_{j,o,m} \leq t\primed$.
        \item \label{itm:mach_constr} \textbf{Machine constraint} For every machine $M_m\in M$, operations $O_{j,o} \neq O_{j\primed,o\primed} \in O$ and $T_t,T_{t\primed} \in T$ with $\of{O_{j,o},M_m,T_t},\of{O_{j\primed,o\primed},M_m,T_{t\primed}} \in S$ we have $t\primed \notin [t,t+d_{j,o,m})$. 
        That is, at no point in time two operations are being processed simultaneously on the same machine.
    \end{enumerate}
    A schedule is called \emph{feasible} if each of the constraints is satisfied.
    The number of time steps $n_T$ must be chosen such that it is guaranteed that a feasible schedule finishing within $n_T$ steps exists. This is usually achieved by constructing a feasible schedule via some computationally cheap heuristic method. 
    The aim of the \fjsp{} is to find a feasible schedule, which optimizes a cost function.
    Cost functions for the \fjsp{} can be defined in various fashions \parencite{fjsp_review}. One typical choice which will be our choice in this work is the makespan $C(S)\mDef\max\set*{T_t + d_{j,o,m} | \of{O_{j,o},M_m,T_t}\in S}$, i.e. the total processing time.
    \newline
    
    We formulate the \fjsp{} as a binary combinatorial optimization in the sense of \cref{eq:feas_classical} as follows.
    Every possible schedule $S\subseteq O\times M\times T$ (feasible or not) can be encoded by some binary string $x=x_1\dots x_N\in \cof{0,1}^N$, where $N\mDef \abs{O\times M\times T}$.
    Each $x_i$ denotes whether a given assignment is part of the schedule $S$.
    More precisely for a given schedule $S$, set
    \begin{equation}
        x_i \mDef \begin{cases}
            1, & \iota^{-1}(i)\in S\\
            0, & \text{else},
        \end{cases}
    \end{equation}
    where $\iota\colon\: O\times M\times T \to \cof{1,\dots, N}$ is a bijective map, enumerating all possible assignments.
    Thus $\iota$ induces a bijection $I\colon\:\mathcal{P}\of{O\times M\times T} \to \cof{0,1}^N$.
    The set $F\subseteq \cof{0,1}^N$ of feasible binary strings is then the set of binary strings $x\in \cof{0,1}^N$, 
    such that $S\of{x} \mDef I^{-1}\of{x}$ is a feasible schedule.
    Together with the cost function $f(x)\mDef C\of{\iota^{-1}(x)}$, $(F,f)$ is a binary combinatorial optimization problem as above.
    In particular, the bit $x_{\iota(O_o,M_m,T_t)}$ is set to 1 if we assign operation $o$ to machine $m$ at time $t$.

\section{Graph-based Constraint Mixer Operators}\label{sec:cm-operators}
    Many scheduling problems may be formulated as some variation of a graph-coloring problem \parencite{Marx2004GRAPHCP}.
    The common idea in all such approaches is to consider a graph, where each vertex corresponds to one object (job, assignment of a job to a machine,\dots), and the set of edges is constructed such that ``conflicting" objects share an edge. 
    Roughly speaking, the solution of the scheduling problem then becomes equivalent to finding some coloring of the corresponding graph \parencite{Marx2004GRAPHCP}.
    In many cases, each color represents a time slot or machine for the object to be assigned to.
    However, this approach is hard to realize for the \fjsp{}, because the available assignments for each job depend on the assignments of all the operations.
    Nevertheless, the \fjsp{} can be translated to a constraint graph:
    In contrast to other scheduling problems, not each job but each possible assignment $(j,o,m,t)$ corresponds to a vertex of the graph. 
    Then each pair of vertices, whose corresponding assignments are in conflict with each other, are connected by an edge.
    The resulting graph $G=(V,E)$ now encodes all the constraints of a given \fjsp{} instance.
    Further, each feasible schedule corresponds to a selection $S\subseteq V$ of $\abs{S}=k$ vertices, such that no pair of vertices in $S$ share an edge, i.e $(j,k) \in S^2 \Rightarrow (j,k) \notin E$. 
    
\subsection{The Constraint Graph of \fjsp{}}\label{subsec:fjsp-constraint-graph}
    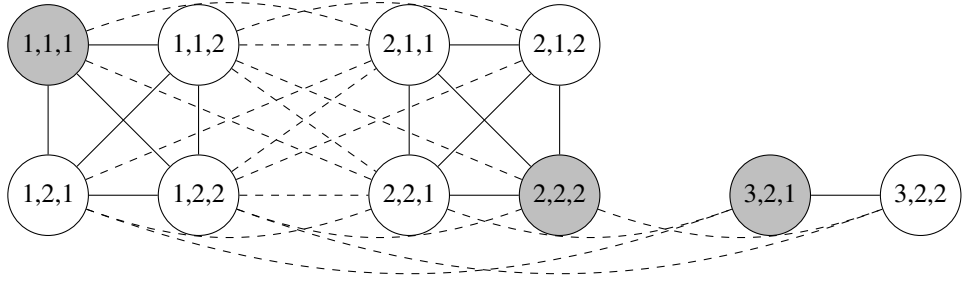
\begin{figure*}
        \centering
        \begin{tikzpicture}[scale=.4]
		\node [style=empty, fill= black!25] (0) at (-5, 5) {1,1,1};
		\node [style=empty] (1) at (0, 5) {1,1,2};
		\node [style=empty] (2) at (-5, 0) {1,2,1};
		\node [style=empty] (3) at (0, 0) {1,2,2};
		\node [style=empty] (4) at (7, 5) {2,1,1};
		\node [style=empty] (5) at (12, 5) {2,1,2};
		\node [style=empty] (6) at (7, 0) {2,2,1};
		\node [style=empty, fill=black!25] (7) at (12, 0) {2,2,2};
		\node [style=empty, fill=black!25] (10) at (19, 0) {3,2,1};
		\node [style=empty] (11) at (24, 0) {3,2,2};
		\draw [style=standard] (4) to (7);
		\draw [style=standard] (6) to (5);
		\draw [style=standard] (0) to (3);
		\draw [style=standard] (2) to (1);
		\draw [style=Other constraint, bend left=20, looseness=1] (0) to (4);
		\draw [style=Other constraint] (0) to (6);
		\draw [style=Other constraint] (1) to (4);
		\draw [style=Other constraint, bend left=20, looseness=1] (1) to (5);
		\draw [style=Other constraint] (1) to (6);
		\draw [style=Other constraint] (1) to (7);
		\draw [style=Other constraint] (2) to (4);
		\draw [style=Other constraint, bend right=20] (2) to (6);
		\draw [style=Other constraint, bend right=20, looseness=1] (2) to (10);
		\draw [style=Other constraint] (3) to (4);
		\draw [style=Other constraint] (3) to (5);
		\draw [style=Other constraint] (3) to (6);
		\draw [style=Other constraint, bend right=20, looseness=1] (3) to (7);
		\draw [style=Other constraint, bend right=20, looseness=1] (3) to (11);
		\draw [style=Other constraint, bend right=20, looseness=1] (6) to (10);
		\draw [style=Other constraint, bend right=20] (7) to (11);
		\draw [style=standard] (0) to (2);
		\draw [style=standard] (2) to (3);
		\draw [style=standard, in=270, out=90] (3) to (1);
		\draw [style=standard] (1) to (0);
		\draw [style=standard] (4) to (5);
		\draw [style=standard] (4) to (6);
		\draw [style=standard] (6) to (7);
		\draw [style=standard] (7) to (5);
		\draw [style=standard] (10) to (11);
\end{tikzpicture}
        \caption{A visualization of an example constraint-graph for the \fjsp{}.
        Here the instance is given by $J=\{J_1,J_2\}$, $O=\{O_{1,1},O_{1,2},O_{2,1}\}$, $M=\{M_1, M_2\}$, $M_{1,1}=M_{1,2}=M$, $M_{2,1}=\{M_2\}$ and $d_{j,o,m} = 1$ for all $j$, $o$, $m$.
        The nodes are labeled as $i$, $m$, $t$, where $i=1$ denotes $O_{1,1}$, $i=2$ denote $O_{1,2}$ and $i=3$ denotes $O_{2,1}$.
        Due to the assignment constraint, each operation corresponds to a complete subgraph, where the edges are drawn solid.
        The rest of the edges, corresponding to other constraints, are drawn dashed.
        The nodes highlighted in grey correspond to a feasible solution of this \fjsp{} instance.
        Note that no marked nodes are connected with an edge.
        }
        \label{fig:fjsp_graph}
    \end{figure*}
    Consider a \fjsp{} instance, characterized by the data listed in \Cref{subsec:fjsp}. Denote the set of assignments by $\mathcal{A} \coloneqq O \times M \times T$.
    Note that the constraints \ref{itm:ass_constr}, \ref{itm:ord_constraint} and \ref{itm:mach_constr} are all formulated in terms of pairs of assignments.
    We denote these sets of conflicting pairs with respect to assignment, order and machine constraints by

    \begin{align}
        \Omega^{(A)} &= \set*{
        a,a' \in \mathcal{A}|
            a,a' \text{ are in conflict with \ref{itm:ass_constr}}
        },   \\
        \Omega^{(O)} &= \set*{
        a,a' \in \mathcal{A}|
            a,a' \text{ are in conflict with \ref{itm:ord_constraint}}
        },   \\
        \Omega^{(M)} &= \set*{
        a,a' \in \mathcal{A}|
            a,a' \text{ are in conflict with \ref{itm:mach_constr}}
        }. 
    \end{align}
    
    Then the construction of the constraint graph described above results in $G=(V, E)$ with 
    \begin{align}
        V &= \set*{\iota\big(O_{j,o},M_m,T_t\big)\in  [N]  | M_m \in M_{j,o}}\,,     \\
        E &= \set*{(j,k) | \cof{\iota^{-1}(j),\iota^{-1}(k)} \in \Omega^{(\text{total})}}\,,
    \end{align}
    where $\Omega^{(\text{total})} \coloneqq \Omega^{(A)} \cup \Omega^{(O)}\cup \Omega^{(M)}$  and $\iota$ is the bijective enumeration of the set of all schedules.
    The number $k$ of vertices to be colored is the total number of operations $k=\abs{O}$.
    \Cref{fig:fjsp_graph} shows the constraint graph for a small instance of the \fjsp{} together with a feasible solution.

\subsection{Graph-controlled Permutation Mixers}\label{subsec:graph-controlled permuation mixers}%
    In the following, we construct mixing operators for the CM-QAOA for the flexible job shop problem making use of the constraint graph formalism. More concretely, an $N$-bit string $x \in \{0,1\}^N$ encoding a schedule $S$ is equivalent to marking $\abs{x}$ vertices of the $N$-vertex constraint graph, where $\abs{x} = \sum_{j}x_j$ is the \emph{Hamming weight} of $x$. The validity of the bit string $x$ corresponds to the property that 
    
\begin{itemize}
    \item there are no adjacent marked vertices in $G$: 
    \begin{equation}\label{eq:adjacent_nodes}
        (j,k)\in E \imply \lnot(x_j \land x_k)
    \end{equation}
    and
    \item exactly $k$ vertices are marked: 
    \begin{equation}\label{eq:hamming_weight}
     \abs{x} = k.   
    \end{equation}
\end{itemize}

    Define the set $F$ of feasible solutions satisfying these constraints as 
        \begin{equation}
            F = \set*{x\in\cof{0,1}^N | x\text{ satisfies \cref{eq:adjacent_nodes} and \cref{eq:hamming_weight}}}
            \label{eq:graph_constraints}
        \end{equation}
    Consider a quantum register of $N$ qubits in $(\C^2)^{\otimes N}$ and denote its computational basis by $\set*{\ket{\vb{x}} | \vb{x}\in \cof{0,1}^N}$. We define the \emph{feasible subspace} to be $\mathcal F = \Span_\C \set*{\ket x | x\in F}$.
    Our goal is to construct a family $B_{\beta} = \set*{\mix\of{\beta} | \beta\in\R}$ of mixing unitaries (cf. \Cref{def:feas_reach}) with respect to the constraints in \cref{eq:adjacent_nodes} and \cref{eq:hamming_weight}.
    To this end, we follow the procedure described by \textcite{hadfield_diss}.
    We start by constructing local and classical mixing rules and continue by translating those to unitary operators.\\
    Starting with a feasible schedule $x$, we want the Hamming weight $\abs{x}$ to be invariant under application of the mixing operators.
    This is ensured by only applying permutations $\pi\in S^N$ to the indices of a bit string, i.e. $\pi(\vb{x}) \mDef \of{x_{\pi^{-1}(1)},\dots,x_{\pi^{-1}(N)}}$.
    The corresponding permutation unitaries
            \begin{equation}
                U_{\pi}\ket{\vb{x}} = \ket{\pi\of{\vb{x}}}
            \end{equation}
    can be extended to a family of unitaries
    \begin{equation}\label{eq:fractional_permutation_unitary}
        U_\pi(\beta) = \cos\of{\beta} \OneOp - \imun \sin\of{\beta}U_\pi
    \end{equation}
    which preserves the constraint \cref{eq:hamming_weight}.
    In order to preserve the constraint \cref{eq:adjacent_nodes} we check whether a permutation will preserve the constraint via a boolean function $\chi\colon\:\cof{0,1}^k\to \cof{0,1}$ which decides if an input bit string (with an auxiliary bit attached) is feasible or not.
    Given a unitary operator $U$ acting on $k'$ qubits, we define the $\chi$-controlled unitary operator (see \cite[p.145]{hadfield_diss}) $\Lambda_{\chi}\of{U}$ acting on $k+k\primed$ qubits by the linear extension of
        \begin{equation}\label{eq:controlled_unitary}
            \Lambda_{\chi}\of{U}\ket{x}\ket{y} = \begin{cases}
                \ket{x}\ket{y}, & \chi\of{y}=0\\ (U\ket{x})\ket{y}, & \chi\of{y}=1
            \end{cases}
        \end{equation}
    defined for computational basis states $\ket{x}$ and $\ket{y}$ given by $x\in \cof{0,1}^k, y\in \cof{0,1}^{k'}$. 
    If we find a function $\chi_{\pi}$ for each $\pi$ such that $\chi_{\pi}(x)=1$ if and only if $\pi(x)$ satisfies \cref{eq:graph_constraints}, we can use it as the control for the permutation unitary $U_{\pi}(\beta)$. 
    Denote the neighborhood of a vertex $j\in V$ by $\nbhd\of{j} = \set*{k\in V | (j,k)\in E}$.
    If $x_j=0$ there will be no conflict with the constraint regarding $x_{\pi(j)}$ even if other nodes in $\nbhd(\pi(j))$ are 1.
    On the other hand if $x_j = 1$ all neighbors of $\pi(j)$ need to be 0 after the permutation is applied in order for $\pi$ not to raise a conflict.
    Hence, to prevent conflicts regarding node $j$ we compute
        \begin{equation}\label{eq:single_classical_control}
            \chi_\pi^j\of{x} = \lnot(x_j) \lor \of{\bigwedge_{l\in\nbhd(\pi(j))}\lnot(x_{\pi\inv(l)})}.
        \end{equation}

    Now we can define the complete classical control predicate
        \begin{equation}\label{eq:full_classical_control}
            \begin{aligned}
				\chi_\pi\of{x} 	&= \bigwedge_{j=1}^N \chi_\pi^j\of{x}.
			\end{aligned}
        \end{equation}

        \begin{definition}[Partial graph-controlled permutation mixers]\label{def:partial_mixers}
            For every permutation $\pi\in S^N$ define a \emph{partial graph-controlled permutation mixer}
            \begin{equation}\label{eq:partial_mixers}
                \Lambda_{\chi_{\pi}}\of{\tilde U_{\pi}\of{\beta}},
            \end{equation}
            where $\chi_{\pi}$ is the control function defined in \cref{eq:full_classical_control} and 
            \begin{equation}\label{eq:fractional_permutation_unitary_altered}
                \tilde U_\pi\of{\beta} \mDef \cos{\beta}(\OneOp\otimes\OneOp) - \imun \sin{\beta}(X\otimes U_\pi)
            \end{equation}
            is a version of \cref{eq:fractional_permutation_unitary} which is controlled on an auxiliary qubit $\ket{c}$, whose purpose is explained below, cf. \Cref{fig:control_circuit}.
        \end{definition}

   We present one possible approach for implementing the partial graph-controlled permutation mixers \eqref{eq:partial_mixers}.
    We assume that an implementation of $\tilde U_\pi\of{\beta}$ controlled by a single qubit (see below) is known and focus primarily on the implementation of the graph-control.
    \begin{center}\label{fig:controlled_U_tilde}
        \begin{quantikz}
            \qw & \ctrl{1} & \qw\\
            \qw & \gate{\tilde U_\pi\of{\beta}} & \qw
        \end{quantikz}
    \end{center}

    We explicitly implement the logic of $\chi_\pi$ \eqref{eq:full_classical_control} using additional $3N+2$ auxiliary qubits.
    Take some $x\in \cof{0,1}^N$ and consider $3$ $N$-qubits auxiliary quantum registers $y, a, b$ with initial states $\ket{0}_y$, $\ket{0}_a$ and $\ket{0}_b$ and one-qubit registers with initial states $\ket{0}_c$ and $\ket{0}_z$ of one qubit each.
    The register $y$ will contain information on the main working register $x$.
    Each of the remaining auxiliary registers will represent truth values of intermediate results in the calculation of $\chi_{\pi}\of{y}$. An overview of the purpose of each register can be found in  \cref{table:registers}. 
    Following the construction of sequential mixers presented in \cite{hadfield_diss} we state the following definition for the mixer operators $\mix(\beta)$.
    \begin{definition}[Graph-controlled permutation mixers]\label{def:permutation_mixers}
    Define the sequential mixer
            \begin{equation}\label{eq:permutation_mixers}
                \mix(\beta) \coloneqq \left( \prod_{\pi\in P \subset S^N}\mix_{\pi}(\beta)\right) A \,,
            \end{equation} 
            where $A$ and $B_\pi\of{\beta}$ defined as in \Cref{fig:control_circuit} and $P$ is a suitable subset of permutations; a natural choice is presented in \Cref{lemma:transpositions}. 
            More precisely, in order to implement $\mix (\beta)$ apply \dots
        \begin{enumerate}[label=(\alph{*})]
        \item \dots $A$ as defined in \Cref{fig:control_circuit}, i.e. for each $j=1,\dots,N$ apply $\textsc{CNOT}_{j;j}\ket{\psi}_x\otimes\ket{0}_y$, where $\ket{\psi}_x$ is a superposition of feasible states in the main register. This step copies an unmodified version of the working register into the $y$ register.
        \item \dots a Pauli $X_j$ on each qubit $b_j$, $j=1,\dots,N$ in the register $b$, flipping the qubits in the $b$ register.
        \item \dots $C_j\of{\pi}$ for each $j=1,\dots, N$ as defined in \cref{eq:control_circuits}.
        This step first uses each qubit $a_j$ in the $a$ register to store whether there are any neighbors of $\pi(j)$ which have the value $1$ in the constraint graph corresponding to the schedule which applying $\pi$ would yield. Then the bit $b_j$ is flipped back to $0$ if $a_j$ has the value $0$ and $y_j$ has the value $1$, i.e. if a conflict (two neighboring vertices in the constraint graph both have value $1$) regarding the $j$-th variable arises. This implements the logic in \cref{eq:single_classical_control}.
        
        \item \dots a single multiqubit $\textsc{CNOT}_{1,\dots,N,\lnot N+1; 1}$ applied to the registers $b$ and $c$ (control qubits) and $z$. This stores in the register $z$ whether all qubits in $b_j$ have the value $1$, i.e. if applying the permutation would yield a feasible schedule. This is the implementation of \cref{eq:full_classical_control}.
        \item \dots a single qubit controlled $\tilde U_\pi\of{\beta}$ as defined in \cref{eq:fractional_permutation_unitary_altered}, applied to the registers $c$ and $x$ with $\ket{z}$ as control qubit.
        This applies the parameterized permutation, conditioned on register $z$ (which indicates whether the permutation is feasible) and register $c$, which ensures that the permutation is only applied to the original $\ket{\psi}_x$ portion of the working register.
        \item Uncompute the auxiliary registers as depicted in \Cref{fig:control_circuit}, i.e. apply the inverse operator of (b)-(e).
        \item Repeat (b)-(f) for every $\pi\in P\subset S^N$.
    \end{enumerate}

     \begin{table}[ht!]
    \caption{Quantum registers and their purpose} 
    \begin{center}
    \begin{tabular}{ |c|p{0.55\linewidth}|c| }
     \hline
     $y$ & Aux. register: Copy of the initial state before the application of any permutations in the main register & $N$ qubits \\ 
     \hline
     $a$ & Aux. register: For each variable $x_j$: would rescheduling according to $\pi \in S_N$ lead to the assignment $x_{l}=0$ for every neighbor $l$ of $\pi(j)$? & $N$ qubits\\ 
     \hline
     $b$ & Aux. register: If $a_j=1$ (else set $b_j:=1$): would rescheduling according to $\pi \in S_N$ lead to the assignment $x_{\pi{j}}=1$? & $N$ qubits\\
     \hline
     $z$ & Aux. register: Are all of the variables in register $b$ equal to $1$  & $1$ qubit \\
     \hline
     $c$ &  Aux. register: One variable ensuring that each layer only applies permutations to the original state & $1$ qubit \\
     \hline
     $x$ & Main register: Modified by every time we apply $B_\pi(\beta)$ for a permutation yielding a legal schedule  & $N$ qubits \\
     \hline
    \end{tabular}
    \label{table:registers} 
    \end{center}
    \end{table}
    
    Here $\textsc{CNOT}_{l_1,\dots,l_n;k}$ denotes the multi qubit \textsc{CNOT} acting on qubits $l_1,\dots,l_n$ as control and on qubit $k$ as target.
    A \glqq{}$\lnot$\grqq{} in an expression like $\textsc{CNOT}_{i,\lnot j;k}$ denotes that the $j$-th qubit should act as negated control qubit, while the $i$-th qubit should act as normal control.
    The number of neighbors of a node $j$ in the graph is denoted by $d(j)$. 
    The whole procedure is illustrated in \Cref{fig:control_circuit},
       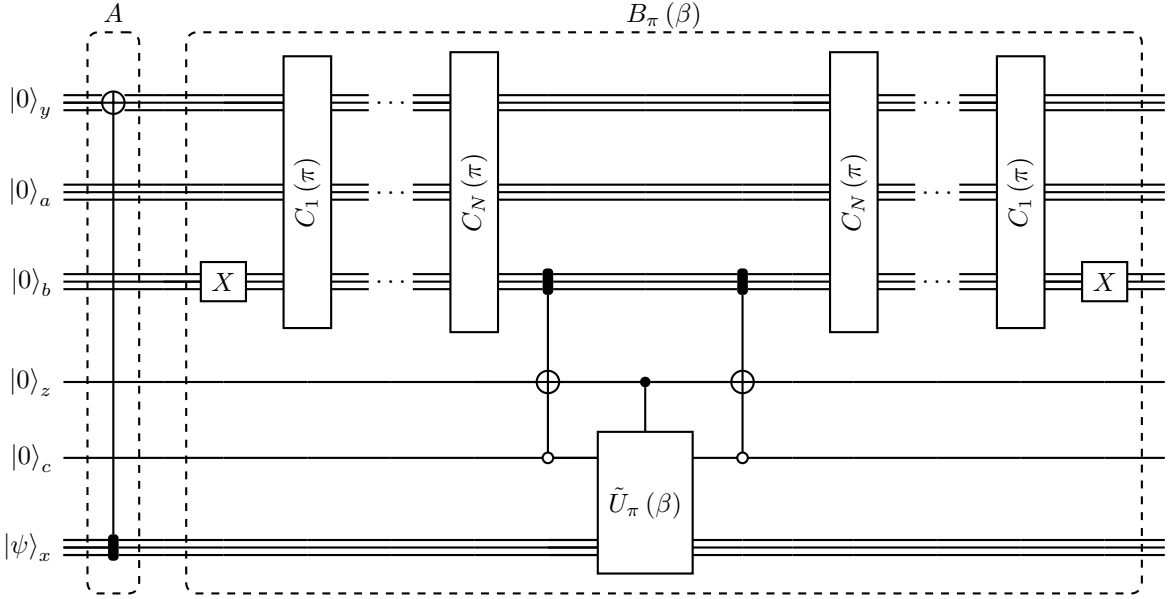
\begin{figure*}
        \centering
        \begin{quantikz}
            \lstick{$\ket{0}_y$}\qwbund{}& \targ{}\qwbund{}\gategroup[wires=6, steps=1, style={dashed,rounded corners, inner xsep=2pt}]{$A$} 
            &\qwbund{}   &\qwbund{}\gategroup[wires=6, steps=12,style={dashed,rounded corners, inner xsep=2pt}]{$\mix_\pi\of{\beta}$}%
                   & \gate[wires=3, bundle={2,3} ]{\rotatebox{90}{$C_1\of{\pi}$}} \qwbund{} %
               &  \ \ldots\ \qwbund{} %
              & \gate[wires=3,bundle={2,3} ]{\rotatebox{90}{$C_N\of{\pi}$}}\qwbund{}          & \qwbund{}     & \qwbund{}                     &\qwbund{} &\qwbund{}
          & \gate[wires=3, bundle={2,3} ]{\rotatebox{90}{$C_N\of{\pi}$}} \qwbund{} %
            &  \ \ldots\ \qwbund{} %
            &\gate[wires=3,bundle={2,3} ]{\rotatebox{90}{$C_1\of{\pi}$}}\qwbund{}     &\qwbund{} &\qwbund{}             \\
            \lstick{$\ket{0}_a$} &\qwbund{}                &  \qwbund{} &\qwbund{}          &         &\ \ldots\ \qwbund{}&         & \qwbund{}     & \qwbund{}                     &\qwbund{} &\qwbund{}&         &\ \ldots\ \qwbund{}&          &\qwbund{} &\qwbund{}
            \\
            \lstick{$\ket{0}_b$} &\qwbund{}                &\qwbund{}   &  \gate{X}\qwbund{}&         &\ \ldots\ \qwbund{}&         & \ctrlbundle{1}& \qwbund{}                     & \ctrlbundle{1} &\qwbund{}&         &\ \ldots\ \qwbund{}&         &  \gate{X}\qwbund{}&\qwbund{}
            \\
            \lstick{$\ket{0}_z$}        &\qw                      &\qw         &\qw                &\qw      & \qw      &\qw      &\targ{}\qw   &\ctrl{1}\qw                         &\targ{}\qw    &\qw     &\qw         &\qw                &\qw       &\qw &\qw      
            \\
            \lstick{$\ket{0}_c$} &\qw                      &  \qw       &\qw                &  \qw    &   \qw    & \qw     & \octrl{-1}       & \gate[wires=2, bundle={2}]{\tilde U_\pi\of{\beta}}\qw                       & \octrl{-1} &\qw      &\qw      &\qw    & \qw       &\qw  &\qw     
            \\
            \lstick{$\ket{\psi}_x$}  &\ctrlbundle{-5}\qwbund{} &\qwbund{}   &  \qwbund{}        &\qwbund{}& \qwbund{}&\qwbund{}& \qwbund{}     & \qwbund{}&\qwbund{}& \qwbund{}    &\qwbund{}& \qwbund{}    &\qwbund{}&\qwbund{}&\qwbund{}
        \end{quantikz}
        \caption{
        The quantum circuit of $B_{\pi}\of{\beta}A$ used for graph-controlled permutation mixers $\mix(\beta)$ from \Cref{def:permutation_mixers}. The $C_j\of{\pi}$ are defined in \cref{eq:control_circuits} and $\tilde U_\pi\of\beta$ is defined in \cref{eq:fractional_permutation_unitary_altered}. The auxiliary registers are initialized in the zero state, the working register contains a superposition of feasible states $\ket{\psi}_x$.
        }
        \label{fig:control_circuit}
    \end{figure*}
    where $C_j\of{\pi}$ is defined in \cref{eq:control_circuits}.
    \begin{equation}\label{eq:control_circuits}
            \begin{quantikz}
                \lstick{$y$} \qwbund& \gate[wires=3,bundle={2,3} ]{C_j\of{\pi}} \qwbund & \qwbund &   \\
                \lstick{$a$} \qwbund&                                  \qwbund & \qwbund &   \\
                \lstick{$b$} \qwbund&                                  \qwbund & \qwbund &   
            \end{quantikz} 
            \coloneqq  
            \begin{quantikz}
                \lstick{$y_{j}$}             \qw &   \qw         &    \qw                & \ctrl{2}      &   \qw \\
                \lstick{$y_{P_j}$}   \qwbundle{}  &   \qwbundle{d\of{\pi\of{j}}}     &    \octrl{1}    & \qwbundle{}   &   \qwbundle{} \\
                \lstick{$a_j$}               \qw &   \qw         &    \targ{}            & \octrl{1} &   \qw \\
                \lstick{$b_j$}               \qw &   \qw         &    \qw                & \targ{}   &   \qw 
            \end{quantikz}\,,
    \end{equation}
    where $P_j \coloneqq \pi^{-1}\of{\nbhd\of{\pi(j)}}$.
     \end{definition}
    Note that the uncomputation leads to the states $\ket{0}$ only in registers $a$ and $b$. For the auxiliary registers $y$, $c$ and $z$, we require a fresh register for each layer of mixers. Thus, this protocol requires $2N + (2+N)k$ auxiliary registers in total for $N$ logical qubits and $k$ layers in \qaoa.
    
    There are several options to choose a suitable set of permutations $P$. Considering the entire set $S^N$ fulfills all requirements of \Cref{def:feas_reach} but is practically infeasible since it would require checking $|S^N|= N!$, i.e. factorially many permutations. 
    The following lemma states that it is sufficient to restrict to transpositions.
    \begin{lemma}\label{lemma:transpositions}
    Given feasible schedules $x,y \in F \subset \{0,1\}^N$, there are transpositions $\tau_1,...,\tau_k \in S_N$ such that
    \begin{equation}
        \left(\prod_{i=1}^{k}\tau_i\right)(x)=y \text{ and } \left(\prod_{i=1}^{l}\tau_i\right)(x) \in F 
    \end{equation}
   for each $1 \leq l \leq k$. 
    \end{lemma}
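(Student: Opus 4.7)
The plan is an induction on the ``disagreement count'' $m \mDef \#\{o \in O : x \text{ and } y \text{ assign } o \text{ differently}\}$. By the assignment constraint \ref{itm:ass_constr}, every feasible schedule has Hamming weight equal to $|O|$, and every transposition preserves Hamming weight, so this invariant is maintained throughout. For each disagreeing operation $o$ the difference has the form of a pair $(a_o, b_o)$, with $x$ marking $a_o = \iota(o,m,t)$ and $y$ marking $b_o = \iota(o,m',t')$; the transposition $\tau = (a_o \; b_o)$ applied to the current bit string removes $a_o$ and inserts $b_o$, thereby reducing the disagreement count by one.

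For $m=0$ we take the empty product. For $m \geq 1$ the key subclaim to establish is: there exists at least one disagreeing operation $o^*$ such that $\tau^* \mDef (a_{o^*}\; b_{o^*})$ satisfies $\tau^*(x) \in F$. Granting this subclaim, applying the inductive hypothesis to $(\tau^*(x), y)$, which disagree on $m-1$ operations, supplies the remaining transpositions, and the entire sequence $\tau^*, \tau_2, \dots, \tau_k$ keeps every intermediate configuration in $F$ by construction.

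To prove the subclaim, I would order the disagreeing operations by the start time assigned to them in $y$ (breaking ties canonically) and take $o^*$ to be the earliest. The only newly-marked vertex of the constraint graph in $\tau^*(x)$ is $b_{o^*}$, so any violation of \cref{eq:adjacent_nodes} must be an edge between $b_{o^*}$ and another currently-marked assignment $a'$. Such an $a'$ cannot lie in $x \cap y$, for then $y$ itself would already contain the edge $(a', b_{o^*})$, contradicting $y \in F$. Hence $a' = a_{o_j}$ for some other disagreeing operation $o_j$, and by the extremal choice of $o^*$ the $y$-start time of $b_{o^*}$ precedes that of every other $b_{o_i}$. One then argues case-by-case, using the order constraint \ref{itm:ord_constraint} within a job and the machine constraint \ref{itm:mach_constr} across jobs, that this precedence forbids the edge $(a_{o_j}, b_{o^*})$.

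The main obstacle is closing that last case analysis cleanly, because both the order and machine constraints are interval-based (depending on the durations $d_{j,o,m}$) rather than point-based, so a short target assignment $b_{o^*}$ could in principle collide with a long still-pending $a_{o_j}$. A robust fall-back, if the extremal choice of $o^*$ fails in a corner case, is to allow a ``detour'': first transpose the blocking $a_{o_j}$ into a currently unused $(M_m, T_t)$ slot --- such slots exist by the slack built into the choice of $n_T$ --- perform the intended $\tau^*$, and subsequently move the detoured assignment back toward its $y$-target. Verifying that this detour can itself be realized by a feasibility-preserving chain of transpositions is essentially a restricted instance of the lemma and can be handled by the same induction, so I expect the whole argument to close without circularity.
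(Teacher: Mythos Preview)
Your direct-swap induction does not close. Take two single-operation jobs that must both run on machine $M_1$ with unit duration, and two time slots $T_1,T_2$. Let $x$ assign job~$1$ to $T_1$ and job~$2$ to $T_2$, while $y$ swaps them. The earliest $y$-start time among the disagreeing operations belongs to job~$2$ (target $T_1$), so your rule sets $o^\ast=\text{job }2$ and $\tau^\ast$ moves job~$2$ from $T_2$ to $T_1$. But job~$1$ is still sitting at $(M_1,T_1)$ in $x$, so $\tau^\ast(x)$ violates the machine constraint. The blocking vertex $a'=a_{\text{job }1}$ is indeed the $x$-assignment of another disagreeing operation, and the $y$-start of $b_{o^\ast}$ does precede that of $b_{\text{job }1}$ --- yet the edge $(a_{\text{job }1},b_{o^\ast})$ is present, so the asserted step ``this precedence forbids the edge'' is simply false. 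In fact neither of the two direct swaps is feasible here, so no ordering of the disagreeing operations by their $y$-data can rescue the argument.

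Your detour fallback is the right instinct, but as written it is circular: parking a blocking operation in an unused slot does not decrease the disagreement count $m$, so you cannot appeal to ``the same induction'' to handle it. You would need a different, lexicographic potential, or else abandon the one-operation-at-a-time scheme.

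The paper does the latter. It imposes the slack assumption $\max(C(x),C(y))<\tfrac{1}{2}n_T$ (enlarging $n_T$ if necessary) and uses the late portion of the time window as a global parking lot: first every operation of $x$ is shifted from time $t$ to time $2t$ via one transposition each (processing operations within a job from last to first so the order constraint survives each step), and only then is each operation moved from its parked slot to its $y$-target (now first to last within each job). This two-phase detour through an empty region sidesteps your case analysis entirely; the price is that $n_T$ must be taken roughly twice as large as either schedule actually needs.
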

    \begin{proof}
    Assume that the makespans $C(x)$, $C(y)$ of schedules $x$ and $y$ satisfy $\max(C(x), C(y))<\frac{1}{2}n_T$, else choose a higher cutoff for number of steps $n_T$.\\
    Let $j \in J$ and $o \in O_j$ specify a job and an operation. Without loss of generality, we assume that the bijective enumeration function $\iota: O \times M \times T \rightarrow \{1,\dots,N\}$ is built such that $\iota(o,m,2t) = 2\iota(o,m,t)$. Define indices $b_{j,o}, c_{j,o} \in \{1,\dots,N\}$ such that
    \begin{equation}\label{eq:b_jo}
    \iota^{-1}(b_{j,o}) = (j,o,m,t) \text{ for some } m \text{, } t \text{ and } x_{b_{j,o}}=1   
    \end{equation}
    and 
    \begin{equation}\label{eq:c_jo}
    \iota^{-1}(c_{j,o}) = (j,o,m',t') \text{ for some } m' \text{, } t' \text{ and } y_{c_{j,o}}=1. 
    \end{equation}
    In other words $b_{j,o}$ and $c_{j,o}$ are the indices corresponding to operation $o$ of job $j$ whose entry in the schedule $x$ resp. $y$ is $1$. As every feasible schedule assigns an operation to exactly one machine and starting time, the properties \eqref{eq:b_jo} and \eqref{eq:c_jo} characterize $b_{j,o}$ and $c_{j,o}$ uniquely.
    
    Define transpositions $\tau_{j,o}^{1 \rightarrow 2}, \tau_{j,o}^{2 \rightarrow 1} \in S_N$ via
    \begin{equation}
        \tau_{j,o}^{1 \rightarrow 2}(a) = 
        \begin{cases}
            2a & \text{ if } a = b_{j,o}\\
            \frac{a}{2} & \text{ if } \frac{a}{2} = b_{j,o}\\
            a & \text{ otherwise}
        \end{cases}
    \end{equation}
    and
    \begin{equation}
        \tau_{j,o}^{2 \rightarrow 1}(a) = 
        \begin{cases}
            c_{j,o} & \text{ if } \frac{a}{2} = b_{j,o}\\
            2b_{j,o} & \text{ if } a = c_{j,o}\\
            a & \text{ otherwise.}
        \end{cases}
    \end{equation}
$\tau_{j,o}^{1 \rightarrow 2}$ is the transposition which shifts the assignment of operation $(j,o)$ according to schedule $x$ into the second half of the available time window. The transposition $\tau_{j,o}^{2 \rightarrow 1}$ shifts the assignment of operation $(j,o)$ back into the first half of the available time window, but into the position specified by the goal schedule $y$.
Now, we define an order in which we apply the transpositions $\tau_{j,o}^{1 \rightarrow 2}$ and $\tau_{j,o}^{2 \rightarrow 1}$ for each job and operation $(j,o)$. We first apply $\tau_{j,o}^{1 \rightarrow 2}$ and within every job $j$, we go from last to first operation, thereby ensuring that the order constraint is satisfied at all times.
\begin{equation}
    \pi := \prod_{j \in J} \prod_{o=1}^{O_j} \tau_{j,o}^{1 \rightarrow 2}\,.
\end{equation}
In the second step, we apply the transpositions $\tau_{j,o}^{2 \rightarrow 1}$, this time going from first to last operation within each job.
\begin{equation}
    \rho := \prod_{j \in J} \prod_{o=O_j}^{1} \tau_{j,o}^{2 \rightarrow 1}\,.
\end{equation}
In total, we get the desired result
\begin{equation}
    \rho \pi (x) = y\,.
\end{equation}
\end{proof}
    
Now we will show that the mixing operators given in \Cref{def:permutation_mixers} preserve feasibility with respect to the constraints \eqref{eq:graph_constraints}.
This implies that, when starting in a feasible state and running \qaoa, the evolution stays in the feasible subspace of the given problem.
There will be no overlaps with non-feasible states and the algorithm is exclusively focused on optimizing with regard to a given goal function. In the following, we denote the Hilbert space of the auxiliary registers by $\mathcal{H}_{\text{aux}}\coloneqq (\C^{2})^{\otimes (3N+2)}$ and $\ket{0}_{\text{aux}}\coloneqq \bigotimes_{i=1}^{3N+2}\ket{0}$.
        \begin{theorem}\label{thm:feasibility}(\textit{Preserving feasibility}).
            The mixer operators $\mix(\beta)$ given in \Cref{def:permutation_mixers} preserve feasibility with regard to the constraints given in \cref{eq:graph_constraints}, i.e. for every $x\in F$, $\beta \in \R$ we have $$\mix(\beta)\left(\ket{x}\otimes \ket{0}_{\text{aux}}\right) \in\feas\otimes \mathcal{H}_{\text{aux}}\,.$$
        \end{theorem}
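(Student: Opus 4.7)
The plan is to expand $\mix(\beta)$ via its explicit circuit realisation in \Cref{fig:control_circuit} and to simulate it branch by branch on the computational-basis input $\ket{x}\otimes\ket{0}_{\text{aux}}$ for $x\in F$, verifying that in every branch of the output superposition the main register still carries a bit string in $F$. The auxiliary registers start in the all-zero state and each sub-circuit inside $\mix_\pi(\beta)$ decomposes into purely classical reversible logic plus a single controlled parametric rotation $\tilde U_\pi(\beta)$, so the analysis reduces to tracking one predicate computation per factor.

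First, $A$ is the $N$-fold $\textsc{CNOT}$ cascade copying $x$ from the main register into the $y$ register, so after $A$ the state equals $\ket{x}_x\ket{x}_y\ket{0}_a\ket{0}_b\ket{0}_z\ket{0}_c$. For a single $\mix_\pi(\beta)$, the steps (b)--(d) of \Cref{def:permutation_mixers} realise a classical reversible evaluation of the predicate $\chi_\pi$ into $z$: the $X^{\otimes N}$ on $b$ initialises $b_j=1$; each block $C_j(\pi)$ first writes $a_j=\bigwedge_{l\in P_j}\neg y_l$ and then toggles $b_j$ precisely when $y_j=1$ and $a_j=0$, leaving $b_j=\chi_\pi^j(y)$ by \eqref{eq:single_classical_control}; the multi-qubit $\textsc{CNOT}$ then stores $z=\chi_\pi(y)\wedge\neg c$ as in \eqref{eq:full_classical_control}. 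The $z$-controlled $\tilde U_\pi(\beta)$ of step~(e) therefore modifies $x$ only on branches with $\chi_\pi(y)=1$ and $c=0$.

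The crux is the equivalence
\begin{equation*}
    x\in F\ \text{and}\ \chi_\pi(x)=1 \ \iff\ \pi(x)\in F,
\end{equation*}
which I would read off directly from the definitions \eqref{eq:single_classical_control}--\eqref{eq:full_classical_control}: permutations preserve Hamming weight so \eqref{eq:hamming_weight} carries over automatically, whereas via $(\pi(x))_v=x_{\pi^{-1}(v)}$ the literal $\chi_\pi^{\pi^{-1}(v)}(x)=1$ is the statement ``vertex $v$ is unmarked in $\pi(x)$ or none of its neighbours are marked'', and conjoining over all $v$ is exactly the no-adjacent-marked-vertices condition \eqref{eq:adjacent_nodes} for $\pi(x)$. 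Consequently every firing branch of $\tilde U_\pi(\beta)$ sends the feasible $x$-content to $\pi(x)\in F$, while non-firing branches leave $x$ untouched, so the $x$-register still lies in $\feas$ after a single $\mix_\pi(\beta)$.

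Finally, I would promote this to the full product $\prod_\pi\mix_\pi(\beta)$ by induction on the number of factors: if after $m$ factors the state decomposes as $\sum_j c_j\ket{x_j}_x\otimes\ket{\phi_j}_{\text{aux}}$ with every $x_j\in F$, the next factor evaluates its predicate from the unaltered $y$-register and, by the equivalence above, either leaves a branch's $x$-content fixed or replaces it with a permutation image still in $F$. The principal technical obstacle is the bookkeeping of the uncomputation block inside each $\mix_\pi(\beta)$: one has to verify that the reversed sub-circuit returns $a$ and $b$ to $\ket{0}$ on every branch so that the next accumulator is clean, and that any residual occupation of $z$ and $c$ stays factored out of the $x$-register as $\pi$-indexed garbage rather than leaking into it. Once this bookkeeping is confirmed, the feasibility invariant propagates through the entire product and we obtain $\mix(\beta)(\ket{x}\otimes\ket{0}_{\text{aux}})\in\feas\otimes\mathcal{H}_{\text{aux}}$.
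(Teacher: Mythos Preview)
Your approach is essentially the same as the paper's --- trace the circuit on a feasible basis input and read off that every branch of the resulting superposition carries a feasible bit string --- but you supply considerably more detail than the paper does: the paper's proof merely asserts the output form
\[
\alpha\,\ket{x}_x\ket{x}_y\ket{0}_{a,b,z,c}\;+\;\sum_i \alpha_i\,\ket{\pi_i(x)}_x\ket{x}_y\ket{0}_{a,b}\ket{\text{trash}(i)}_{z,c}
\]
and says the control logic and the $c$-register make it so. Your explicit verification of the equivalence $x\in F\wedge\chi_\pi(x)=1\iff\pi(x)\in F$ is a genuine addition.

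There is one place where your induction drifts away from the mechanism the paper actually uses. In your inductive step you write that the next factor ``either leaves a branch's $x$-content fixed or replaces it with a permutation image still in $F$''. Taken literally this would require $\pi_{m+1}(x_j)\in F$ for an arbitrary feasible branch value $x_j$, but the predicate you evaluate lives on the $y$-register and equals $\chi_{\pi_{m+1}}(x)$ for the \emph{original} $x$; your equivalence therefore only certifies $\pi_{m+1}(x)\in F$. The paper's point --- which you do state earlier (``only on branches with $\chi_\pi(y)=1$ and $c=0$'') but then drop in the induction --- is that the $\lnot c$ control guarantees the permutation is only ever applied to the untouched $\ket{x}$ portion of the main register, so compositions $\pi_{m+1}(\pi_i(x))$ never enter and the question of their feasibility does not arise. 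The correct inductive invariant is thus not merely ``every $x_j\in F$'' but the stronger structural statement the paper writes down directly: the state is a superposition of $\ket{x}$ (tagged $c=0$) and single-permutation images $\ket{\pi_i(x)}$ (tagged $c=1$). Your closing paragraph correctly flags the $z,c$ bookkeeping as the crux, and filling it in amounts to proving exactly this invariant.
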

        \begin{proof}
            Given any list of permutations $\pi_1,...,\pi_n \in S_N$, the quantum state after applying $\prod_i B_{\pi_i}(\beta)A$ to a feasible initial basis state $\ket{x}$ with $x\in F$ is given by 
            \begin{align*}
                &\left( \prod_{i=1}^n B_{\pi_i}(\beta) \right) A \left(\ket{x}_x \otimes \ket{0}_{\text{aux}}\right) = \\
                &\left( \prod_{i=1}^n B_{\pi_i}(\beta) \right) \left(\ket{x}_x \otimes \ket{x}_y \otimes \ket{0}_{a,b,z,c}\right) = \\
                &\alpha \left(\ket{x}_x \otimes \ket{x}_y \otimes \ket{0}_{a,b,z,c}\right) + \\
                &\sum_{i=1}^n \alpha_i \left(\ket{\pi_i(x)}_x \otimes \ket{x}_y \otimes\ket{0}_{a,b}\otimes \ket{\text{trash}(i)}_{z,c}\right),
            \end{align*}
            where $\alpha_i = 0$ iff $\pi_i(x)$ is not a feasible solution. This is ensured by construction of the control logic with auxiliary registers $a,b,y$ and $z$. The control on auxiliary register $c$ ensures by construction that a permutation $\pi_i$ is only ever applied to the initial $\ket{x}$ portion of the superposition in the main register, cf. \Cref{fig:control_circuit}.
        \end{proof}
The second criterion $\mix$ needs to meet in order for being a suitable mixer is that every feasible basis state in $\feas$ can be reached by a number of repetitions.
    This is necessary for the phase-separator to have access to all possible solutions in order to find the best possible one.
    In fact, $\mix$ as given in \Cref{def:permutation_mixers} does have this property. We define $\ket{0}_{\text{aux,k}}\coloneqq \bigotimes_{i=1}^{2N+(N+2)k}\ket{0}$.

    \begin{theorem}\label{thm:explorability}(\textit{Exploring the feasible subspace}).
        Let $x\in F$, denote the initial state in the auxiliary registers as $\ket{0}_{\text{aux},k}\in \mathcal{H}_{\text{aux},k}$, the mixer $\mix(\beta)$ as given in \Cref{def:permutation_mixers} and $\ket{\psi_{\text{mix}, k, \beta}(x)} \coloneqq \left(\mix(\beta)\right)^k\left(\ket{x}\otimes \ket{0}_{\text{aux,k}}\right)$. Further define $\rho^x_{\text{mix}, k, \beta}\coloneqq \Tr_{\text{aux}}{\left[\ketbra{\psi_{\text{mix}, k, \beta}(x)}{\psi_{\text{mix}, k, \beta}(x)}\right]}$. Then for every $x,x\primed\in F$ there are some $k\in\N$ and $\beta\in \R$  s.t. 
        $$\Tr\left[\ketbra{x\primed }{x\primed }\rho^x_{\text{mix}, k, \beta}\right]>0\,.$$
    \end{theorem}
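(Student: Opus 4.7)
The plan is to use Lemma~\ref{lemma:transpositions} to decompose a path from $x$ to $x'$ through feasible intermediates, and then to trace a specific quantum trajectory through $k$ layers of $\mix(\beta)$ that deposits nonzero amplitude on $\ket{x'}$. Given $x, x' \in F$, Lemma~\ref{lemma:transpositions} provides transpositions $\tau_1, \ldots, \tau_m \in P$ such that $y_j \mDef (\tau_j \cdots \tau_1)(x) \in F$ for every $0 \le j \le m$ and $y_m = x'$. I take $k = m$ and fix $\beta \in (0, \pi/2)$ so that both $\cos\beta$ and $\sin\beta$ are nonzero.

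I would proceed by induction on $j \in \cof{0, \ldots, m}$, showing that after $j$ layers the joint state $\ket{\psi_j} \mDef (\mix(\beta))^j (\ket{x}_x \otimes \ket{0}_{\text{aux,k}})$ contains a computational basis component $\ket{y_j}_x \otimes \ket{a_j}_{\text{aux}}$ with nonzero amplitude, for some particular aux configuration $\ket{a_j}$. The base case $j=0$ is immediate. For the inductive step, suppose $\alpha \ket{y_j}_x \otimes \ket{a_j}$ occurs in $\ket{\psi_j}$ with $\alpha \ne 0$. In layer $j+1$, the operator $A$ copies $y_j$ into the fresh $y$-register, and then the partial mixers $B_\pi(\beta)$ are executed in the fixed order of $P$. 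For each $\pi \in P$ preceding $\tau_{j+1}$ I select the summand that keeps $c$ in $\ket{0}$: either $\chi_\pi(y_j) = 0$, in which case $\tilde U_\pi(\beta)$ is not triggered at all, or $\chi_\pi(y_j) = 1$, in which case I pick the $\cos\beta$-weighted term of $\tilde U_\pi(\beta)$ that leaves the $x$-register and $c$ unchanged. When $B_{\tau_{j+1}}(\beta)$ acts, feasibility of $y_{j+1} = \tau_{j+1}(y_j)$ forces $\chi_{\tau_{j+1}}(y_j) = 1$, and I pick the $-\imun \sin\beta$ branch, which sends the $x$-register to $\ket{y_{j+1}}$ and flips $c$ to $\ket{1}$. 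Every subsequent $B_{\pi'}(\beta)$ is gated off by the negative control on $c$, and the uncomputation tail restores $a$ and $b$ to zero, leaving a definite aux configuration $\ket{a_{j+1}}$. The amplitude of this specific branch is $\alpha \cdot (\cos\beta)^{n_{j+1}} \cdot (-\imun \sin\beta)$, where $n_{j+1}$ counts the $\pi$'s preceding $\tau_{j+1}$ in $P$ that are feasible from $y_j$; by the choice of $\beta$ this is nonzero.

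After $m$ layers, $\ket{\psi_m}$ contains $\ket{x'}_x \otimes \ket{a_m}$ with nonzero amplitude $\mu$. Tracing out the auxiliary registers then yields
\begin{equation*}
    \Tr \eof{\ketbra{x'}{x'} \rho^x_{\text{mix}, m, \beta}} = \sum_{a} \abs{\braket{x' \otimes a}{\psi_m}}^2 \geq \abs{\mu}^2 > 0\,,
\end{equation*}
which is the claim. The main delicacy is verifying that the distinguished trajectory is not cancelled by destructive interference with other branches in the same layer: the negative control on $c$ enforces that at most one $\tilde U_\pi(\beta)$ actually flips $c$ within a layer, and because each layer uses its own fresh aux registers, distinct choices of which transposition fires produce mutually orthogonal aux configurations on the $(y,c,z)$-blocks. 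Distinct trajectories therefore terminate in distinguishable sectors of the main-times-aux register, so no cancellation can obscure the amplitude of the good path identified above.
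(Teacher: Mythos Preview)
Your proof is correct and follows essentially the same strategy as the paper's: invoke Lemma~\ref{lemma:transpositions} to obtain a feasible chain of transpositions, track the distinguished trajectory through $k=m$ mixer layers, and argue that it deposits nonzero amplitude on $\ket{x'}$. Your treatment is somewhat more explicit than the paper's in two respects: you fix $\beta \in (0,\pi/2)$ from the outset rather than leaving it unspecified, and you justify the absence of destructive interference directly by noting that the fresh per-layer $(y,c,z)$ registers record the sequence of intermediate states visited, so the distinguished path terminates in its own orthogonal auxiliary sector. The paper instead dispatches the cancellation issue with the weaker remark that one may choose $\beta$ so that contributions from $\ket{\text{rest}}$ do not annihilate the tracked coefficient $\alpha$.
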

    \begin{proof}
        Let $x,x\primed \in F$. Denote the transpositions provided by \Cref{lemma:transpositions} by $\tau_1,...,\tau_k \in S_N$ and recall their properties 
        \begin{equation}\label{eq: taus in proof of explorability}
        \left(\prod_{i=1}^{k}\tau_i\right)(x)=x' \text{ and } \left(\prod_{i=1}^{l}\tau_i\right)(x) \in F  
    \end{equation}
     for each $ 1 \leq l \leq k$. Then 
    \begin{align*}
        &\ket{\psi_{\text{mix}, k, \beta}(x)} = \left(\mix(\beta)\right)^k  \left(\ket{x}\otimes \ket{0}_{\text{aux,k}}\right) \\
        &= \alpha \ket{\tau_k\cdots \tau_1 (x)}\otimes  \ket{0}_{a,b}\otimes \ket{\text{trash}}^{\otimes k}_{y,z,c} + \ket{\text{rest}} \\
        & = \alpha \ket{x'}\otimes  \ket{0}_{a,b}\otimes \ket{\text{trash}}^{\otimes k}_{y,z,c} + \ket{\text{rest}},
    \end{align*}
    where $\alpha =\prod_{j=1}^k \alpha_j$, $\alpha_j>0$ being the probability that we apply $\tau_j$ in the $j$-th layer. The remaining sum of feasible states $\ket{\text{rest}}$ may possibly contain a summand with $\ket{x'}$, so we choose $\beta$ such that it does not cancel out the coefficient $\alpha$. Denote the state in the auxiliary registers corresponding to $x'$ by $\ket{\psi (x')}_{\text{aux}} \coloneqq \ket{0}_{a,b}\otimes \ket{\text{trash}}^{\otimes k}_{y,z,c}$ as derived above. Then the probability of obtaining $\ket{\psi (x')}_{\text{aux}}$ when measuring the auxiliary registers of $\ket{\psi_{\text{mix}, k, \beta}(x)}$ in the computational basis is given by $P(\ket{\psi (x')}_{\text{aux}})>0$. Thus, for a suitable choice of $\beta$ and $k$ as chosen in \cref{eq: taus in proof of explorability} we have
        \begin{align*}
            \Tr\left[\ketbra{x\primed }{x\primed }\rho^x_{\text{mix}, k, \beta}\right]=
            P\left(\ket{\psi (x')}_{\text{aux}}\right)\alpha^2 > 0\,.
        \end{align*}
    \end{proof}

\section{Discussion and Outlook}\label{sec:discussion}
    We have introduced mixer operators for the \fjsp{} and have proven that they preserve feasibility and explore the entire feasible subspace. 
    While the considered constraint graph model is the same as the one considered by \textcite{hannover_paper}, the general approach is very different. 
    The work by \textcite{hannover_paper} has implicitly shown, that for the \fjsp{} permutation unitaries are not sufficient to construct mixing unitaries in the sense of \textcite{hadfield_diss}, when no auxiliary qubits are used.
    In this work, we employ several auxiliary registers in order to ensure that the emerging mixing unitaries preserve the feasible subspace.
    In general, the described control protocol may be applied to all problems that can be modelled by a constraint graph. Additionally, our construction includes a few tricks which can be reused in other protocols, such as the auxiliary register $c$ which stores the information whether an operator has been applied to the quantum state, thus avoiding that a product of permutations is applied. The direct implementation of constraints in the operators $C_i$ can be used whenever constraints admit a boolean formulation as in \cref{eq:single_classical_control}.
    \newline
    The practicality of the presented mixing operators can be debated critically. The direct implementation $C_1 \dots C_N$ of the control logic adds an overhead of $\mathcal O \of{N^3}$ Toffoli, \textsc CNOT and single qubit gates to the single qubit controlled $U_\pi$. In total, the mixer operator $\mix\of{\beta}$ applies at least $\mathcal{O}(N^2) \of{\mathcal O\of{N^3} + n_p} = \mathcal O\of{N^5} + n_p \mathcal O\of{N^2}$ Toffoli, \textsc CNOT and single qubit gates, where $n_p = \min\set*{n_\pi | \pi \in S^N}$ denotes the minimal number of basic gates necessary to implement the single qubit controlled $U_\pi$. This results in a polynomial gate count, which we suspect, however, to be quite demanding on current quantum systems, considering their limited coherence times. 
    A direction for further research could be to exploit the logical structure of specific constraints in order to enforce these in a more efficient way. For certain types of constraints, this has been achieved by \cite{CM_QAOA_Fuchs2022}. Also, one could employ a hybrid approach, where some constraints are enforced via efficiently constructed constraint mixers, while for others the more common penalty term method is kept in place.
    Further, numerical simulations may give more insight on the potential of approaches using constraint mixers.
    Also, a combination of both the direct approach presented in this work and the abstract analysis of the constraint-graph \parencite{hannover_paper} could reduce the number of necessary gates and result in an applicable set of mixing unitaries for solving the \fjsp{} with the \qaoa{}.
    This can be possibly achieved following \cite{Meyer_2023}, which presents a concrete method to design variational circuits which take into account problem symmetries.
    \newline
    Another way to reduce the number of auxiliary qubits as well as the gate complexity of one single $\mix_\pi$ \eqref{eq:partial_mixers}, might be a deeper analysis of the logarithm of the fractional permutation unitaries \eqref{eq:fractional_permutation_unitary}. Finding a general logarithm for these unlocks a different control implementation by considering the exponential of the tensor product of the logarithm and the Hamiltonian simulating the classical control logic.

\section*{Acknowledgements}
The authors thank L. Binkowski, J. R. Fin\v zgar, C. B. Mendl, T. Osborne and B. Poggel for fruitful discussions and comments. 
LP and MH acknowledge financial support
by the BMBF project QuBRA.

\vspace{2cm}


\printbibliography

\end{document}